\newcommand {\ignore} [1] {}
\renewcommand{\Pr}[1]{\mbox{\rm\bf Pr}\left[#1\right]}
\newcommand{\Ex}[1]{\mbox{\rm\bf E}\left[#1\right]}
\newcommand{\cE}{{\cal E}}
\newcommand{\cB}{{\cal B}}
\newcommand{\cS}{{\cal S}}
\newcommand{\cF}{{\cal F}}
\newenvironment{proof}{\noindent   {\bf Proof.}}{\hspace*{\fill}$\Box$\par\vspace{2mm}}
\newtheorem{lemma}{Lemma}
\newtheorem{theorem}{Theorem}
\newtheorem{corollary}{Corollary}
\newtheorem{claim}{Claim}
\newcommand{\localalgo}{LocalHypColoring}
\title{Deterministic coloring algorithms in the LOCAL model\thanks{This paper has been submitted to ACM-SIAM SODA 2020. Date of this version: 10 July 2019.}}
\author[1]{Dariusz R. Kowalski}
\author[2]{Piotr Krysta}
\affil[1]{Augusta University, Augusta, GA, USA}
\affil[ ]{{\tt darek.liv@gmail.com}}
\affil[2]{University of Liverpool, Liverpool, UK}
\affil[ ]{{\tt pkrysta@liverpool.ac.uk}}
\date{}%9 July 2019}
\begin{document}

\maketitle
\setcounter{page}{0}
\thispagestyle{empty}

\begin{abstract}
We study the problem of bi-chromatic coloring of hypergraphs in the LOCAL distributed model of
computation. This problem can easily be solved by a randomized local algorithm with no communication. However, it is not known how to solve it deterministically with only a polylogarithmic number of communication rounds. In this paper we indeed design such a deterministic algorithm that solves this problem with polylogarithmic number of communication rounds. This is an almost exponential improvement on the previously known deterministic local algorithms for this problem. Because the bi-chromatic coloring of hypergraphs problem is known to be complete in the class of all locally checkable graph problems, our result implies deterministic local algorithms with polylogarithmic number of communication rounds for all such problems for which an efficient randomized algorithm exists. This solves one of the fundamental open problems in the area of local distributed graph algorithms. By reductions due to Ghaffari, Kuhn and Maus [STOC 2017] this implies such polylogarithmically efficient deterministic local algorithms for many graph problems.
\end{abstract}

%\maketitle

\newpage

\section{Introduction}

Suppose that we are given a bipartite graph $G = (U \cup V, E)$ such that $|U| = |V| = n$ and the degree of each $w \in U$ is $\delta/2 \leq \deg(w) \leq \delta$, where $\delta=\delta(n)$ satisfies $c \ln^2 (n) \leq \delta \leq \log^{O(1)}(n)$ for some sufficiently large constant $c >0$.
   The goal is to color the vertices of $V$ with two colors red and blue such that for each $w \in U$ its neighborhood $N(w) \subseteq V$ has at least one vertex colored blue and at least one vertex colored red. A neighborhood $N(w)$ is called {\em bi-chromatic} if it has this property. The coloring is called bi-chromatic if $N(w)$ is bi-chromatic for every $w \in U$, and it is called monochromatic otherwise. Note that here the neighborhood $N(w)$ of $w$ does not contain $w$ itself.

The trivial randomized algorithm is to color each $v \in V$ independently and uniformly at random red with probability $1/2$ and blue with probability $1/2$. This will ensure that each neighborhood $N(w)$ for each $w \in U$ is bi-chromatic with high probability. The goal is to achieve the bi-chromatic coloring by a deterministic algorithm that only uses $poly \log(n)$ number of rounds (in the distributed model LOCAL). For instance, the above randomized algorithm does not need any communication rounds.

Although the coloring problem on graphs is more directly relevant to the LOCAL model of communication and computation (clearly defined computation entities and bidirectional links), it could be equivalently formulated as the hypergraph coloring problem as follows. Let $H = (V, \cE)$ be a hypergraph whose set of vertices is $V$ and set of hyperedges is $\cE$ such that $\cE = \{N(w): w \in U\}$. Our problem is to color the vertices $V$ of hypergraph $H$ with two colors, red and blue, such that each hyperedge $N(w) \in \cE$ of $H$ is bi-chromatic. 
The entities in the LOCAL model are both hyperedges and vertices, and (bidirectional) communication links are between a hyperedge and its vertices. In the remainder we will follow this abstraction of the problem.

\paragraph{Our results and technical contributions.} One of the fundamental open problems in the area of LOCAL distributed graph problems is the question of whether randomization is required for efficient symmetry breaking. Many basic graph computational problems admit efficient randomized local algorithms which only use polylogarithmic communication. On the other hand the best known deterministic local algorithms are almost exponentially worse with respect to communication compared to randomized algorithms. Examples of such problems include hypergraph bi-chromatic coloring (a.k.a.~hypergraph $2$-coloring), conflict-free hypergraph multicoloring, low-diameter ordering, and many other, see, e.g., \cite{GKM17}.

Ghaffari, Kuhn and Maus \cite{GKM17} showed that the bi-chromatic hypergraph coloring problem is complete with respect to the above fundamental open problem in the sense that if there is an efficient deterministic distributed algorithm with polylogarithmic communication for this problem, then there is such an algorithm for all locally checkable graph problems for which an efficient randomized algorithm exists. In this paper, we resolve this fundamental open problem and design a deterministic local algorithm for the bi-chromatic hypergraph coloring problem with $O(\log^2 n)$ communication rounds. By the results of Ghaffari, Kuhn, and Maus \cite{GKM17}, this implies such efficient deterministic local algorithms for many problems, including the local splitting problem, conflict-free hypergraph multicoloring, low-diameter network decomposition into clusters with small chromatic number, low-diameter ordering, approximating general covering or packing integer linear programs. In the complexity-theoretic language of the paper \cite{GKM17} our result shows that P-LOCAL $=$ P-SLOCAL. We would like to emphasize that the best known deterministic local algorithms for all those problems use roughly $2^{O(\sqrt{\log n})}$ communication rounds, where $n$ is typically the number of nodes in the input (hyper)graph, and we improve this to $O(\log^{O(1)} n)$ communication rounds, which is an almost exponential improvement.

We build our new solution on the previous approaches to hypergraph $2$-coloring: categorizing hyperedges with respect to being biased towards specific color, special consideration of biased components, introduced by Beck, and derandomization techniques in the context of the PRAM model proposed by Alon. When using random colorings as a base, the probability of obtaining monochromatic coloring in Beck's and Alon's approaches was enough to apply derandomization in the PRAM model, but not sufficient to do it in the LOCAL model. The reason is that in PRAM processes may access quickly any stored information (though it is sometimes non-trivial to find out by a process where it is), and thus could verify whether there is any large component in the hypergraph which is incorrectly colored. Such a verification is not physically possible in the LOCAL model. We overcome this obstacle in three steps. First, we substantially lower the probability of {\em partial} failure when starting from random colorings. As this was not enough, we introduced consecutive phases, based on independent random colorings, and a way to color and re-color the hypergraph consistently phase by phase. Finally, we analyzed and derandomized this process with respect not to all potential hypergraph topologies (as there are too many of them), but their representatives being $2,3$-skeletons. This way we showed that there exists a deterministic algorithm, such that for any parameter $n$ it can locally compute $x=\Theta(\log^2 n)$ deterministic colorings and apply them in $x$ phases to gradually color any hypergraph of $n$ hyperedges and $n$ vertices in a bi-chromatic way.

\paragraph{Previous and related work.} Our work is closely related to two main strands of research. One is, on (derandomized) algorithms for the Lov\'{a}sz Local Lemma, whose representative problem is the hypergraph bichromatic coloring. Already Erd\H{o}s and Lov\'{a}sz showed in \cite{EL75} a condition for the existence of a bi-chromatic coloring in a hypergraph by proving a version of what is now known as the Lov\'{a}sz Local Lemma (LLL). Beck \cite{Beck91} provided the first algorithmic version of LLL, by giving a randomized sequential algorithm which he showed how to derandomize. Beck's flagship application was also the hypergraph bi-chromatic coloring. Alon \cite{Alon91}, buidling on Beck's approach, introduced a new randomized algorithmic version of LLL on  hypergraph bi-chromatic coloring and showed how to derandomize it that led to a deterministic parallel algorithm on the PRAM model. These randomized, deterministic sequential and parallel algorithms for LLL were improved in \cite{CS,Heupler2013}. Recent work on distributed algorithms for LLL include \cite{CPS14,GHK18}. The other research strand is on efficient (de)randomized algorithms for graph problems in the local model. A recent paper \cite{GKM17} nicely unifies this research area by presenting a general complexity theoretic study; see further references there for an extensive list of previous results. A recent paper by Ghaffari and Kuhn \cite{GK19} gives improved randomized local algorithms for a broad class of graph problems (studied in \cite{GKM17}) with improved success probability. This paper also presents derandomization results for some of these algorithms. Finally, very recently, Bamberger et al.~\cite{BamGKMU19} design a deterministic local algorithm for the hypergraph bichromatic coloring problem with small number of communication rounds if the frequency of vertices in the hyperedges is bounded.

\section{Model and preliminaries}

\subsection{Local communication and computation model}

In the LOCAL model of distributed computing,
there are $N$ computational units connected by links.
There are two most popular communication topologies:
all-to-all (clique) and the topology corresponding
to the input data. We follow the second setting,
which is more challenging due to the limited knowledge
not only about the computational input, but also about 
the communication structure.
In our case - the communication topology
is a bipartite graph of $G=(U,V,E)$ with $N=2n$
computational units, such that $n$ units in $V=\{1,\ldots,n\}$ represent
vertices of the input while $n$ units in $U=\{n+1,\ldots,2n\}$ represent
hyperedges. 
Two units are connected by an edge -- communication link -- if the the vertex represented by one unit belongs to the hyperedge represented by another one.
In the remainder, we will be calling the units representing
vertices -- vertices, and those representing hyperedges -- hyperedges.

Communication happens in synchronous (communication) rounds.
In a single round, each vertex and hyperedge can send/receive a message to/from their neighbors in $G$ and perform
a local computation.
In particular, two intersecting hyperedges can exchange 
information in two rounds, by sending it to a common vertex which forwards it to the other peer in the second round.
Messages sent in a round are received by the neighbors of the senders in the same round.

We consider the problem of finding bi-chromatic coloring
of all hyperedges in a given hypergraph.
As an input, each unit, vertex or hyperedge, gets parameter $n$,
which is both the number of vertices and hyperedges,
and the list of hyperedges it belongs to in case of a vertex or the list of vertices that belong to it in case of a hyperedge.
For a given input parameter $n$, we call all hypergraphs of $n$
vertices and $n$ hyperedges, each containing $\delta$ vertices,
{\em admissible}.

The primary complexity measure in the LOCAL model is a 
{\em round complexity}, defined as the number of communication rounds to accomplish the task.

\noindent
{\bf Assumption:} 
%Similarly to Alon \cite{Alon91} 
We assume without loss of generality that each hyperedge $f=N(w)$, for any $w \in U$, has the same size $\delta$. If this is not true we simply replace each hyperedge $N(w)$ by {\em an arbitrary} subset of its vertices of size precisely $\delta$. 
%where $n\rq{} = \min \{N(u) : u \in U\}$.
Using the same argument, we could also assume further that $\delta = c\cdot \log n$, for a sufficiently
large constant $c>0$.

\subsection{Notation and previous approaches to bi-chromatic coloring of hypergraphs}
%Elements of Alon\rq{}s construction}

In this section we will introduce a useful notation and review some of the previously known tools for bi-chromatic coloring of hypergraphs that date back to Beck \cite{Beck91} and Alon \cite{Alon91}. We will 
build on them to define our new approach and its analysis.

We will describe now some steps of Alon's extension of Beck's algorithm for computing a bi-chromatic coloring that are relevant to our approach. The first phase of the algorithm colors each vertex $v \in V$ independently and uniformly at random with color red or blue with equal probability $1/2$. If after that there are monochromatic hyperedges, the second phase of the algorithm re-colors some of the vertices of $V$ such that finally $H$ becomes bi-chromatic. The following assumption ensures, by the Lov\'{a}sz Local Lemma, that there exists a bi-chromatic coloring in $H$:
$$
   2 e (d+1) \leq 2^{\alpha \delta} %\mbox{ and } 4 e (d\rq{})^3 \leq 2^{\delta(1-h(\alpha))}
   \ ,
$$ 
where 
%$n\rq{}$ is the size of any hyperedge in $H$ and
no hyperedge in $H$ intersects more than $d$ other hyperedges, and $0 < \alpha < 1/2$ is some constant. Observe that in our case $d \leq n$. Let $h(\cdot)$ denote the binary entropy function: $h(x) = -x \log_2(x) - (1-x) \log_2(1-x)$.

%\noindent
%{\bf Remark:} We might not need the assumption $4 e (d\rq{})^3 \leq 2^{\delta(1-h(\alpha))}$, or we might need a %version of this assumption.

Suppose that we are given a random coloring of the first phase. We call a hyperedge $N(w) \in \cal{E}$ {\em biased} if at most $\alpha \delta$ of its vertices in $N(w)$ are red or at most $\alpha \delta$ of its vertices in $N(w)$ are blue. Let $\cB$ be the set of all biased hyperedges. It is easy to see that then the probability that a given fixed hyperedge is biased is at most 
$$
2 \sum_{i \leq \alpha \delta} {\delta \choose i}/2^{\delta} \leq 2 \cdot 2^{(h(\alpha)-1)\delta}
\ .
$$ 
Let us define $p = 2 \cdot 2^{(h(\alpha)-1)\delta}$. Let now $G(H)$ be the intersection graph of hypergraph $H$, that is, the vertices of $G(H)$ are the hyperedges of $H$ and any two vertices of $G(H)$ are adjacent if and only if their corresponding hyperedges intersect in $H$; in order to avoid confusion, we will call vertices of $G(H)$ hyperedges, which they indeed correspond to. We call a set $T$ of hyperedges in $G(H)$ a {\em connected $1,2$-component} if $T$ is the set of hyperedges in a connected subgraph in the square of $G(H)$; that is, $T$ is {\em connected $1,2$-component} if for any two hyperedges $x, y \in T$, there is a path between them in the intersection graph $G(H)$ such that among any two consecutive hyperedges on the path at least one belongs to $T$. We also refer as {\em $1,2$-component} to a collection of disjoint connected $1,2$-components. % 
%and we draw an edge between $x$ and $y$ if their distance (i.e., the number of edges) in $G(H)$ is either $1$ or $2$ the resulting drawn graph is connected.

\begin{lemma}[Alon \cite{Alon91}]
  The probability that every connected $1,2$-component of biased hyperedges
  %$C\subseteq \cB$ 
  $T$ 
  in $G(H)$ has size at most $d \log(2 n)$ is at least $1/2$.
\end{lemma}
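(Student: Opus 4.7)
The plan is to union-bound over all potential large ``skeletons'' of $1,2$-components, after extracting from each connected $1,2$-component $T$ a vertex-disjoint subset $T'\subseteq T$ whose biased-events are mutually independent. The key is to decouple the combinatorial structure (a tree-like skeleton we can count) from the probabilistic events (which need to be independent for a Chernoff-free calculation).

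I would first prove a structural claim: every connected $1,2$-component $T\subseteq G(H)$ of size $s$ contains a subset $T'$ such that (i) hyperedges in $T'$ are pairwise vertex-disjoint (equivalently, pairwise non-adjacent in $G(H)$); (ii) $|T'|\ge\lceil s/(d+1)\rceil$; and (iii) $T'$ is connected in the power $G(H)^{K}$ for some constant $K$ (say $K=4$). For (i)--(ii), take $T'$ to be any inclusion-maximal set of pairwise non-adjacent hyperedges in $T$; maximality forces every $v\in T\setminus T'$ to have a $G(H)$-neighbor in $T'$, and since each hyperedge in $T'$ has at most $d$ such neighbors, $|T\setminus T'|\le d|T'|$. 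For (iii), the $1,2$-component property gives, for any $x,y\in T'$, a sequence $x=w_0,w_1,\dots,w_m=y$ in $T$ with consecutive pairs at $G(H)$-distance at most $2$; replacing each $w_i\notin T'$ by one of its $G(H)$-neighbors inside $T'$ produces a walk on $T'$ whose consecutive pairs are at $G(H)$-distance at most $4$, so $T'$ is connected in $G(H)^{4}$.

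Next, because the hyperedges in $T'$ are pairwise vertex-disjoint, the events $\{v\text{ is biased}\}_{v\in T'}$ depend on disjoint sets of independent coin flips, hence are mutually independent, each with probability at most $p=2\cdot 2^{(h(\alpha)-1)\delta}$. Set $t=\lceil\log(2n)\rceil$. A $1,2$-component of biased hyperedges of size greater than $d\log(2n)$ yields a $T'$ of size at least $t$, which in turn contains a rooted subtree of $G(H)^{K}$ of exactly $t$ biased hyperedges. Since the maximum degree of $G(H)^{K}$ is at most $d^{K}$, the standard subtree-counting bound gives at most $(ed^{K})^{t-1}$ rooted subtrees of size $t$ per root. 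Summing over the $n$ possible roots,
$$
\Pr{\text{some bad } 1,2\text{-component exists}} \;\le\; n\cdot (ed^{K})^{t-1}\cdot p^{t}.
$$
Choosing $\alpha<1/2$ so that $1-h(\alpha)$ is bounded away from $0$, and using the hypothesis $2e(d+1)\le 2^{\alpha\delta}$ together with $\delta=\Omega(\log n)$, the factor $ed^{K}p$ is at most $1/n^{\Omega(1)}$, making the right-hand side at most $1/2$.

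The main obstacle I expect is the structural step, namely preserving the bounded-power connectedness of $T'$ simultaneously with vertex-disjointness and the $s/(d+1)$ size bound — the ``rounding'' to $T'$ must not collapse the path or inflate distances beyond a constant factor. Once the skeleton $T'$ has been extracted, the remaining independence and counting arguments are a by-now-standard Beck--Alon-style application of the subtree-counting bound for bounded-degree graphs, and the assumption $\delta=\Omega(\log n)$ gives enough slack in $p$ to absorb the $d^{K}$ loss from working in a constant power of the intersection graph.
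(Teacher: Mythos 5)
Your proposal is correct and follows essentially the same Beck--Alon route as the paper's own argument (its proof of Lemma~\ref{l:123-sets}): extract from the $1,2$-component a pairwise-disjoint dominating skeleton, use independence of the biased events on it, and count bounded-degree subtrees in a constant power of $G(H)$ via a first-moment bound; the paper merely proceeds in the reverse order, first bounding maximal connected $2,3$-components (trees in the distance-$\{2,3\}$ graph of degree at most $d^3$) and then converting to $1,2$-components via domination, which costs the same factor $d$. Only two cosmetic adjustments are needed in your write-up: a component of size greater than $d\log(2n)$ yields $|T'|\ge \frac{d}{d+1}\log(2n)$ rather than $\lceil\log(2n)\rceil$, and $\alpha$ must be chosen so that $4\alpha+h(\alpha)<1$ (not merely $\alpha<1/2$), both of which are absorbed by the ample slack in the final estimate.
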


We call the first phase of Alon\rq{}s algorithm {\em successful} if there is no connected $1,2$-component of biased hyperedges 
%$C\subseteq \cB$ 
$T$ of size greater than $d \log(2 n)$. 
%all of whose vertices are ib $B$. 
Thus, the first phase is successful with probability at least $1/2$.
 Suppose that indeed the first phase was successful. Given the coloring from the first phase, the second phase
of the algorithm will alter this coloring to finally compute a bi-chromatic coloring, by recoloring vertices of $V$ that belong to biased hyperedges. 
We call a vertex from $V$ {\em bad} if it belongs to some biased hyperedge;
otherwise we call it {\em good}.
We call a hyperedge $N(w)$ {\em dangerous} if it contains at least $\alpha \delta$ bad vertices, i.e., vertices that belong to biased hyperedges. Note that biased hyperedges are also dangerous.
\\

\noindent
{\bf Observation:} If a hyperedge is not dangerous then it will not become monochromatic after the recoloring. \\

This follows because a non-dangerous hyperedge has at least  $\alpha \delta$ vertices of each color before the recoloring (because it is not biased), and during the recoloring less than $\alpha \delta$ of its vertices could change the color.
It follows, therefore, that during the recoloring phase we only need to worry about the dangerous hyperedges. The recoloring algorithm recolors all the vertices in biased hyperedges randomly and independently with equal probabilities. This means that any given dangerous hyperedge becomes monochromatic during this recoloring with probability less than $2^{-\alpha \delta}$. If we now look at the hypergraph $H$ only taking into account the vertices in the biased hyperedges, we can treat each such hyperedge as having size at least $\alpha \delta$ and therefore, by the assumption $2 e (d+1) \leq 2^{\alpha \delta}$, the Lov\'{a}sz Local Lemma implies that there exists a recoloring of these vertices such that the global coloring of the hypergraph $H$ is bi-chromatic.

Alon\rq{}s algorithm finds this recoloring of the vertices in the biased hyperedges by trying all such recolorings exhaustively, assuming that $d$ and $\delta$ are fixed constants. Namely, he shows that it suffices to recolor each maximal connected $1,2$-component of biased hyperedges exhaustively, independent from other such maximal connected $1,2$-components of biased hyperedges. By the above lemma, the size of each such connected $1,2$-component is $O(\log(2 n))$ if $d$ is constant, and, therefore there are only polynomially many recolorings for any such connected $1,2$-component.

The crucial point of Alon\rq{}s argument is that the vertices in the hyperedges of any maximal connected $1,2$-component $T$ of biased hyperedges can be recolored separately and independently from other such maximal connected $1,2$-components of biased hyperedges. This follows from the fact that there is no dangerous hyperedge that intersects hyperedges of two distinct such maximal connected $1,2$-components. 
Hence it suffices to recolor the vertices in hyperedges of each such connected $1,2$-component $T$ in such a way that no dangerous hyperedge that intersects a hyperedge of $T$ becomes monochromatic.

\subsubsection{Selected tools from Alon~\cite{Alon91}}

  Alon \cite{Alon91} proved the following result in general hypergraphs, without the assumption of $\delta, d$ being constants.
  
 \begin{theorem}[Alon \cite{Alon91}] \label{t:Alon-derand} Let $\delta \geq 2$, $d \geq 1$ be given integers and $0 < \alpha < 1$ with $d < 2^{ \delta/500}$. Then there exists a randomized algorithm that finds a bi-chromatic coloring of any given $\delta$-uniform hypergraph $H$ with $N$ hyperedges in which no hyperedge intersects more than $d$ other hyperedges, with expected running time $N^{O(1)}$. This algorithm can be derandomized implying a deterministic algorithm with running time $N^{O(1)}$.
 \end{theorem}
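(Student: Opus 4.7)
My plan is to follow Alon's two-phase template iteratively (so as to handle growing $\delta$ and $d$) and then derandomize it via the method of conditional probabilities with an LLL-style pessimistic estimator. For the randomized version, I first run a uniformly random $\{0,1\}$-coloring of the vertices. By the lemma quoted just above the theorem, with probability at least $1/2$ every connected $1,2$-component of biased hyperedges in $G(H)$ has at most $s_0=d\log(2N)$ hyperedges; resample until this event holds, which takes $O(1)$ trials in expectation. Then partition the bad portion of $H$ into independent recoloring subproblems, one per maximal connected $1,2$-component $T$ of biased hyperedges: by the Observation in the excerpt, only dangerous hyperedges intersecting $T$ could become monochromatic under recoloring of vertices inside $T$. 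Each subproblem is itself a $2$-coloring instance on a subhypergraph in which the relevant hyperedges still contain at least $\alpha\delta$ vertices to be recolored and have intersection degree at most $d$; using $d<2^{\delta/500}$ we can pick a constant $\alpha<1/2$ so that the LLL hypothesis $2e(d+1)<2^{\alpha\delta}$ continues to hold, and hence a valid recoloring of each subproblem exists.

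To actually find such a recoloring, apply the same two-phase procedure recursively on each subproblem. The size shrinks rapidly between levels: a subproblem of size $N_i$ produces sub-subproblems of size at most $O(d\log N_i)$, so after $O(\log^* N)$ levels the subproblems have size polynomial in $d$. At that base case I exhaustively enumerate the $2^{\text{size}}$ candidate colorings within each tiny subproblem and take any bi-chromatic one, whose existence is again guaranteed by LLL. Since subproblems at each level partition disjoint vertex sets, summing over levels gives total expected work $N^{O(1)}$.

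The derandomization replaces the random coloring at every level by a deterministic one, produced by the method of conditional probabilities applied to the pessimistic estimator
\[
\Phi \;=\; \sum_{T}\Pr{T\text{ is an all-biased connected }1,2\text{-component of size at least }s_0},
\]
ranging over candidate hyperedge sets $T$. For each $T$ of size $s$, one extracts a sub-collection of $\Omega(s/d)$ pairwise vertex-disjoint hyperedges (using that every vertex lies in at most $d{+}1$ hyperedges of $T$), which makes their biasedness events mutually independent and bounds $\Pr{T\text{ all biased}}$ by $p^{\Omega(s/d)}$. The number of topologically distinct candidates $T$ of size $s$ is at most $N\cdot(ed)^{O(s)}$, so $\Phi$ starts well below $1$ precisely because the gap $d<2^{\delta/500}$ dominates this combinatorial count. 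Fixing colors one vertex at a time to the value that keeps $\Phi$ non-increasing preserves the invariant $\Phi<1$, so the resulting deterministic coloring still satisfies the conclusion of Alon's lemma.

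The main obstacle I anticipate is evaluating the conditional expectations of $\Phi$ in $N^{O(1)}$ time, since $\Phi$ is a sum over combinatorially many objects. I would address this by grouping candidates $T$ according to a short skeleton, consisting of a maximal pairwise vertex-disjoint sub-collection of size $\Theta(s/d)$ together with a connectivity pattern in $G(H)$; each skeleton contributes a closed-form product in the current partial-coloring counts, and the number of skeletons is $N^{O(1)}$ thanks to the size bound $s_0=O(d\log N)$ and degree bound $d$. Combined with a handful of independently derandomized random-coloring invocations per recursive level, this yields the deterministic $N^{O(1)}$ algorithm claimed by the theorem.
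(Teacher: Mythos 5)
Your plan fails in exactly the regime that makes this theorem nontrivial, namely large $d$ (the hypothesis allows $d$ up to $2^{\delta/500}$, e.g.\ $d=N^{1/500}$ when $\delta=\Theta(\log N)$). First, your base case: after the recursion stalls (note $N_{i+1}=O(d\log N_i)$ has fixed point $\Theta(d\log d)$, so the pieces never get smaller than that), you enumerate all $2^{\mathrm{size}}$ colorings of a piece with $\mathrm{poly}(d)$ hyperedges, i.e.\ up to $\delta\cdot\mathrm{poly}(d)$ vertices; that is $2^{\delta\cdot\mathrm{poly}(d)}$ work, which is nowhere near $N^{O(1)}$ once $d$ is super-polylogarithmic. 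Exhaustive search is only legitimate when $d$ is very small (Alon brute-forces pieces only in the case $d\le\log\log N$, where a piece has $\mathrm{poly}(\log\log N)$ vertices); for large $d$ he instead observes that a fresh random recoloring of a piece leaves an expected number of monochromatic (more precisely, insufficiently balanced) hyperedges below $1$ (in fact below $1/d$), so a good recoloring of each piece can be found by the method of conditional expectations hyperedge-by-hyperedge --- that case split is the missing idea, and no amount of recursing the two-phase scheme substitutes for it (recursion also shrinks the recolorable part of each hyperedge to $\alpha^i\delta$, eroding the LLL slack if the depth is not $O(1)$).

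Second, your pessimistic estimator is summed over connected $1,2$-components $T$, with count $N(ed)^{O(s)}$ and probability bound $p^{\Omega(s/d)}$ obtained from a vertex-disjoint subfamily of size $\Omega(s/d)$. Per hyperedge of $T$ this trades a factor $\mathrm{poly}(d)$ in the count against only $p^{1/(d+1)}$ in the probability, so the generic term behaves like $N\bigl((2ed^2)^{d+1}p\bigr)^{s/(d+1)}$, and $(2ed^2)^{d+1}p<1$ requires roughly $d\log d=O(\delta)$ --- false in the allowed range $d<2^{\delta/500}$. Hence $\Phi$ does not start below $1$, and the conditional-probabilities argument never gets off the ground. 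The correct bookkeeping (this is the content of Lemma~\ref{l:123-sets}, following Alon) counts connected $2,3$-components of pairwise disjoint biased hyperedges: there independence gives decay $p^{u}$ in the \emph{full} size $u$ against a tree count $(ed^3)^{u}$, so $d<2^{\delta/500}$ suffices, and the $1,2$-component bound is then deduced deterministically from the domination property. Relatedly, Alon derandomizes the first pass not by conditional expectations on a global estimator but by exhausting a polynomial-size space of (almost) $k$-wise independent colorings with $k=\max\{\delta,O(\log N)\}$, which sidesteps your acknowledged difficulty of evaluating conditional expectations of $\Phi$. (For calibration: the paper does not reprove this theorem --- it is quoted from Alon --- but the surrounding material fixes the intended proof structure, and your reconstruction deviates from it at precisely these two load-bearing points.)
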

 
  We will now introduce some notions and tools from Alon\rq{}s paper \cite{Alon91} that we will use in our work. We call a set $T$ of hyperedges in $G(H)$ a {\em connected $2,3$-component} if $T$ is the set of hyperedges such that for any two hyperedges $x, y \in T$, $x \cap y = \emptyset$, and there is a path between them in the intersection graph $G(H)$ such that among any three consecutive hyperedges on this path at least one belongs to $T$.
  
\begin{lemma} \label{l:123-sets} Let $H$ be a $\delta$-uniform hypergraph with $N$ hyperedges in which no hyperedge intersects more than $d$ other hyperedges. We have the following:
\begin{itemize}
    \item    The probability that every connected $2,3$-component of biased hyperedges $T$ in $H$ has size at most $O(\log(N)/\delta)$ is at least $1/2$, where the probability is chosen with respect to a random coloring of $H$.
   \item  The probability that every connected $1,2$-component of biased hyperedges $T'$ in $H$ has size at most $O(d \log(N)/\delta)$ is at least $1/2$, where the probability is chosen with respect to $H$'s random coloring.
   \end{itemize}
\end{lemma}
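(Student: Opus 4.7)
The plan is to prove both bullets by the same union-bound recipe: count the number of candidate components of a given size, then multiply by the probability that they are entirely biased under the random coloring. The key leverage in both cases is independence of the ``biased'' events across \emph{pairwise vertex-disjoint} hyperedges; this is automatic for $2,3$-components (whose members are disjoint by definition), and must be recovered by greedy extraction for $1,2$-components. Throughout I use the single-hyperedge bias bound $p = 2\cdot 2^{(h(\alpha)-1)\delta}$ established earlier, and fix $\alpha<1/2$ small enough that, combined with $d<2^{\delta/500}$, the exponents computed below remain positive.

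For the first bullet, I would count connected $2,3$-components of size $s$ anchored at a fixed hyperedge by a tree-encoding argument in the auxiliary graph on $\cE$ in which two hyperedges are adjacent iff their $G(H)$-distance lies in $\{2,3\}$. This auxiliary graph has maximum degree at most $O(d^3)$, so the number of anchored connected subtrees of size $s$ is at most $(cd^3)^s$ for an absolute constant $c$, giving at most $N(cd^3)^s$ candidates in total. Because members of a $2,3$-component are pairwise vertex-disjoint, their biased events are mutually independent; the probability that a given candidate is entirely biased is therefore at most $p^s$. A union bound yields a failure probability at most $N(cd^3 p)^s$, and under our choice of $\alpha$ and $d<2^{\delta/500}$ we have $cd^3 p = 2^{-\Theta(\delta)}$, so $s=\Theta(\log N/\delta)$ with a sufficiently large hidden constant makes this at most $1/2$.

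For the second bullet, I would replay the same recipe in the auxiliary graph whose edges correspond to $G(H)$-distance in $\{1,2\}$; its maximum degree is at most $d + d(d-1)\le d^2$, so there are at most $N(c'd^2)^t$ anchored connected $1,2$-components of size $t$. Independence of biased events is no longer automatic, so from any such candidate $T'$ I would greedily extract a sub-family $T\subseteq T'$ of pairwise vertex-disjoint hyperedges of size at least $\lceil t/(d+1)\rceil$, using that each chosen hyperedge excludes at most $d$ other intersecting members of $T'$. The biased events on $T$ are independent, so the probability that $T'$ is entirely biased is at most $p^{t/(d+1)}$. The union bound gives a failure probability at most $N(c'd^2)^t\,p^{t/(d+1)}$, which, after taking logarithms and using $d<2^{\delta/500}$, is at most $1/2$ for $t=\Theta(d\log N/\delta)$.

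The step I expect to be the main obstacle is the counting bound $(cd^3)^s$ for $2,3$-components: the defining property that ``among any three consecutive hyperedges on the $G(H)$-path at least one lies in $T$'' is not a purely local property of $T$, and one has to argue that, when the component is grown by a DFS from its anchor, each newly added hyperedge really has at most $O(d^3)$ candidate positions (namely, those at $G(H)$-distance $2$ or $3$ from a hyperedge already in $T$). Once that anchored encoding is justified, the remainder of both parts reduces to the short union-bound calculation sketched above.
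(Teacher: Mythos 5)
Your first bullet is essentially the paper's own proof (the special graph on distance-$\{2,3\}$ pairs, maximum degree $D=d^3$, at most $(eD)^u$ anchored trees, independence of the bias events from pairwise disjointness, then a first-moment/union bound), and the obstacle you flag there is not a real one: if $x,y\in T$ are consecutive members of $T$ along a witnessing path, disjointness of members of $T$ forces their $G(H)$-distance to be at least $2$, while the ``at least one of any three consecutive'' property forces it to be at most $3$; hence $T$ is connected in the special graph and contains a spanning tree there, which is all the anchored-tree count needs.

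The second bullet is where there is a genuine gap. Your direct union bound over connected $1,2$-components gives $N(c'd^2)^t\,p^{t/(d+1)}$, and this cannot be pushed below $1/2$ for $t=\Theta(d\log N/\delta)$ in the full range $d<2^{\delta/500}$: per hyperedge of the candidate component you pay roughly $2\log_2 d$ (which may be as large as $\delta/250$) in the counting term, but you gain only about $(1-h(\alpha))\delta/(d+1)$ from the bias probabilities, because the extracted disjoint subfamily has size only $t/(d+1)\approx \log N/\delta$ no matter how large $t$ is. Concretely, with $t=\Theta(d\log N/\delta)$ the favorable factor is $2^{-\Theta(\log N)}$ while the counting factor is $2^{\Theta(t\log d)}=2^{\Theta((d\log d/\delta)\log N)}$, so the bound explodes as soon as $d\log d\gg\delta$ (e.g.\ $d=2^{\delta/1000}$), which the hypothesis $d<2^{\delta/500}$ permits. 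The paper avoids this entirely: it proves only the $2,3$-statement probabilistically and then deduces the $1,2$-statement \emph{deterministically} --- inside any connected $1,2$-component $T'$ of biased hyperedges, a maximal connected $2,3$-component $T\subseteq T'$ dominates $T'$ in $G(H)$ (every hyperedge of $T'$ intersects some member of $T$, else maximality of $T$ is violated), hence $|T'|\le d|T|\le d\log(2N)/(c\delta)$. Besides covering the whole range of $d$, this makes both bullets hold on the same event and makes the second a deterministic consequence of the first, which the paper exploits later in its ``limited randomness'' remark; your proof of the second bullet needs to be replaced by an argument of this kind.
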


\begin{proof}
  We execute the First Pass of Alon\rq{}s algorithm with $\alpha = 1/8$. That is we color each vertex of $H$ randomly and independently red or blue with equal probabilities. A hyperedge is biased if it has at most $\delta/8$ red vertices or at most $\delta/8$ blue vertices, and it is dengerous if it has at least $\delta/9$ vertices that belong to biased hyperedges. The algorithm will later recolor only vertices in biased hyperedges and hence every non-dangerous hyperedge will have at least $\delta/8 - \delta/9 = \delta/72$ vertices of each color.
  
   We will first bound the number of connected $2,3$-components of size $u$ in $G(H)$. Let us define a {\em special graph} on the set of vertices of graph $G(H)$ in which two vertices are adjacent if their distance in $G(H)$ is either $2$ or $3$. Every connected $2,3$-component on a set $T$ of vertices of $G(H)$ must contain a tree on $T$ in the special graph, and this graph has maximum degree at most $D = d^3$. From a well known fact, see \cite{Alon91}, that an infinite $D$-regular rooted tree contains $\frac{1}{(D-1)u + 1} {Du \choose u}$ rooted subtrees of size $u$, it follows that the number of trees of size $u$ containing one fixed vertex in a graph with maximum degree $D$ is less than this number, which is at most $(eD)^u$.
   
   Let $\cB$ be the set of all biased hyperedges in $H$. For any given connected $2,3$-component $T$ of size $u$, because its hyperedges form an independent set in $G(H)$, we have that $\Pr{T \subseteq \cB} \leq p^u$, where $p = 2 \cdot 2^{(h(\alpha)-1)\delta}$. Let $X$ be a random variable that denotes the number of connected $2,3$-components $T$ of biased hyperedges of size $u$. Using the assumption that $d < 2^{ \delta/500}$, we then obtain that
$$
   \Ex{X} \leq N(eDp)^u \leq N\left(2 e d^3 2^{(h(\alpha)-1)\delta}\right)^u \leq N\left(2^{3\delta/500 + 3 - (1-h(\alpha))\delta}\right)^u \leq N \cdot 2^{-c \delta u},
$$ where $c \approx 0.45$ and the last inequality holds for large enough $\delta > 0$. From this inequality we have that $\Ex{X} \leq 1/2$ for $u = \log(2N)/(c\delta)$. By applying the Markov inequality to $X$ we conclude that with probability at least $1/2$ there is no connected $2,3$-component of biased hyperedges of size greater than $u = \log(2N)/(c\delta)$.

Let us now consider any connected $1,2$-component $T'$ of biased hyperedges in $G(H)$. A maximal connected $2,3$-component $T$ of biased hyperedges within $T'$ must have a property that each hyperedge of $f_i \in T'$ is a neighbor in $G(H)$ of some hyperedge $f_j$ of $T$ (otherwise, this would contradict the maximality of $T$). Because the number of such neighbors $f_i$ of any given $f_j$ is at most $d$, we obtain that $|T'|/d \leq |T| \leq \log(2N)/(c\delta)$, and therefore
$$
  |T'| \leq d \log(2N)/(c\delta),
$$ and this concludes the proof of the lemma. 
\end{proof}

\noindent
{\bf Limited randomness.} It will be useful for us later to observe that to obtain the probabilistic conclusions of Lemma \ref{l:123-sets} it only suffices to use some restricted randomness. Namely, we only used $\delta$-wise independence of any subset of $\delta$ vertices among $n$ vertices that were colored randomly, to obtain that the probability that a fixed hyperedge (of size $\delta$) of $H$ is biased is at most $p = 2 \cdot 2^{(h(\alpha)-1)\delta}$. Then to show that with probability at least $1/2$ there are no connected $2,3$-components of biased hyperedges of size greater than $u = \log(2N)/(c\delta)$, we upper-bound $\Ex{X}$ as the sum over all fixed connected $2,3$-components of biased hyperedges of size $u$. Every such set of hyperedges has at most $u \delta = \log(2N)/c$ vertices, and thus $\log(2N)/c$-wise independence suffices in this case. 

This means that randomness is only needed for the first bullet point in Lemma \ref{l:123-sets}, and the conclusion in the second bullet point holds deterministically. We will later use this observation when dealing with connected $1,2$-components by using randomness only with respect to their maximal connected $2,3$-(sub)components. \\

After the First Pass, Alon considers each maximal connected $1,2$-component of biased hyperedges together with all dangerous hyperedges intersecting it, and call such part of the hypergraph a {\em piece}. Each piece is treated separately and independently and the algorithm recolors its vertices that belong to its biased hyperedges and at least $\delta/9$ vertices (that come from a biased hyperedge) in every other dangerous hyperedge of the piece. One can now repeat the previous argument by running the Second Pass to recolor
each piece separately. This idea partly inspires one of the steps of our approach, however, it is not sufficient in the LOCAL model (Alon showed that that technique was enough in the PRAM model). Namely, it does not have small enough failure probability for our derandomization purpose, and therefore we need to have not just two but many phases, and we need to carefully categorize various types of hyperedges that occur in our new process when the algorithm moves from phase to phase. 
  
\ignore{
  We will consider two cases:
\begin{itemize}
  \item If $d > \log\log(N)$, then a random coloring will fix each piece with high probability. This follows because the probability that a given hyperedge will have less that, say, $\delta/1000$ vertices of each of the two colors is at most
  $$
     \left(2 \cdot \sum_{i=0}^{\delta/1000} {\delta/81 \choose i}\right) / 2^{\delta/81} < 1/d^4,
  $$ and we have at most $O(\log \log (N) + \log(\delta)) \cdot d^2) \leq d^3$ hyperedges ina piece. By taking a union bound the failure probability is less than $1/d$.
  \item If $d \leq \log\log(N)$, then each piece will have at most 
  $O(\log \log (N) + \log(\delta)) \cdot d^2) = O((\log \log (N))^3)$ vertices and therefore we can check all possible colorings in time $2^{O((\log \log (N))^3)} = O(\log (N))^{O(1)}$. The existence of a bi-chromatic coloring where each hyperedge in the piece has at least $\delta/1000$ is guaranteed by the application of the Lovasz Local Lemma.
\end{itemize}

 To finish the proof of Theorem \ref{t:Alon-derand} we show how to derandomize this algorithm. By Lemma \ref{l:k-wise} we can derandomise the First Pass by using a small space of almost $k$-wise inependendent random variables with $k = \max \{\delta, O(\log(2N))\}$, see \cite{Alon91}. This implies a deterministic polynomial time algorithm if $k = O(\log(n))$, which will indeed be the case in our application below. 
 
 What remains is to show how to derandomise the case in the Second Pass where we assume that $d > \log\log(N)$. Let us fixe a given piece in the Second Pass and let $Y$ be a random variable, with respect to the second pass random recoloring, that denotes the number of hyperedges in that piece that become monochromatic, in fact, where each hyperedge receives less than $\delta/1000$ vertices of each color. Then we have that
 $$
   \Ex{Y} \leq O(\log \log (N) + \log(\delta)) \cdot d^2) \cdot \Pr{\mbox{a given hyperedge is monochromatic}} < d^3/d^4 = 1/d,
 $$ where by $d > \log\log(N)$, this expectation is much smaller than $1$. In this case we can color vertices of the piece sequentially by a polynomial time algorithm that uses the method of conditional expectations, see, e.g., Chapter 24 in book \cite{MolloyReed2002}. [This algorithm does only bi-chromatic, so we need to adapt it to produce a coloring where each hyperedge receives $\delta/1000$ vertices of each color. Make sure that this algorithm is LOCAL.]
} %%% end of ignore

\ignore{
 \subsection{Applying Alon\rq{}s proof to our setting}
 
  Now we will apply it to our setting. The high-level idea of our analysis is to prove that first that there exists a collection of red-blue colorings $C_1,\ldots,C_x$, for some $x = \Theta(\log(n)$, such that with very high probability, all the vertices from $V$ will be colored by one of these colorings in such a way that the input hypergraph is bi-chromatic.
  
  For derandomisation, we will use a small family of almost $k$-wise independent random variables of the random space $\{red, blue\}^n$, for some $k = \Theta(\log(n))$. The initial size of any hyperedge of $H$ is $\delta$ such that $c \ln^2 (n) \leq \delta \leq \log^{O(1)}(n)$ for a sufficiently large constant $c >0$. Similar to the construction in the previous subsection, let us replace each hyperedge $e$ of $H$ by any subset of its vertices of size precisely $\delta\rq{} = c_1 \cdot \log(n) < \delta$ for an appropriate constant $c_1 > 0$ to be chosen later. In what follows we will refer to this new hypergraph as $H = (V, \cE)$ using the previous notation. The reason for doing so is that, we will use almost $k$-wise independent random variables with $k = \delta\rq{}$.

  For a given phase $i \in \{0,1,\ldots,x\}$ of our coloring algorithm, let $H_{i+1} = (V_i,\cE_i)$ denote a hypergraph which is colored in phase $i+1$. Recall that $V_i$ is the set of non-idle nodes in $V$ at the end of phase $i$, and $U_i$ is the set of non-idle nodes in $U$ at the end of phase $i$, $\cE_i = \{N(w) : w \in U_i\}$, and $V_0=V$ and $U_0=U$. From the rules of our algorithm we note that with the exception of the very first round where each hyperedge in $\cE_0$ hase size $\delta\rq{}$, each hyperedge in $\cE_i$, for $i \geq 1$, that participates in further phases has size at least $\alpha \cdot \delta\rq{}$ (we can assuma that it has size exactly $\alpha \cdot \delta\rq{}$). 
  
\begin{lemma}
    For each $i \in \{0,1,\ldots,x\}$ the probability that every 2,3-connected set of biased hyperedges $C$ 
  in $H_{i+1}$ has size at most $O(\log(n))$ is at least $1/2$, where the probability is chosen with respect to the random coloring $C_{i+1}$.
\end{lemma}

\begin{lemma}
    For each $i \in \{0,1,\ldots,x\}$ the probability that every 1,2-connected set of biased hyperedges $C$ 
  in $H_{i+1}$ has size at most $O(\log \log(n) + \log( d\rq{} )) \cdot \frac{d\rq{}}{\alpha \cdot \delta\rq{}}$ is at least $1/2$, where the probability is chosen with respect to the random coloring $C_{i+1}$ and each hyperedge of $H_{i+1}$ intersects at most $d\rq{}$ other hyperedges.
\end{lemma}

  Q: Does the fact that in each phase we might color a subset of vertices of a given hyperedge (however, if that hyperedge goes to a next phase then a subset of that hyperedge of size at least $\alpha \delta$ is unfixed and goes to the next phase) affects in any way our argument about probability that its given vertex fails to be colored ? 
} %%% end of ignore

\section{Algorithm \localalgo}

Consider the following algorithm.
As an input, each vertex in $V$ has a sequence of $x$ red-blue colors
from some red-blue colorings $C_1,\ldots,C_x$ of vertices in set $V$.
Later we will show in the analysis that there exists such a sequence of colorings guaranteeing successful bi-chromatic coloring of any admissible input hypergraph by the algorithm.
Let integer $u > 2$ be an input parameter of the algorithm.
We will show in the progress analysis in 
Section~\ref{sec:progress} that $u=O(1)$ suffices
to color bi-chromatically all hyperedges in any admissible 
hypergraph in $x=O(\log^2 n)$ communication rounds.

The algorithm proceeds in $x$ phases, split into $\beta=\Theta(\log n)$
consecutive epochs of $x/\beta$ phases each. 
Suppose that vertex $v\in V$ has not decided
its final color by the beginning of phase $i$, where $1\le i\le x$;
we assume that in the beginning of phase $1$ no vertex has decided
its color yet. We will use terms ``decided'' and ``fixed'' interchangeably, with respect to colors of vertices.
In phase $i$ vertex $v$ adopts temporarily (i.e., for the purpose of the current phase) its color from coloring $C_i$ and participates in the $y$-gossip protocol, together with other undecided vertices in $V$ and 
the hyperedges in $\cE$ to which they belong (recall that, according to the LOCAL model, a single round of communication is done between participating vertices and hyperedges they belong to).
When classifying hyperedges (biased or not, dangerous or not)
in this phase we take into account all its vertices:
those with colors fixed in previous phases and the remaining
with temporal colors taken from coloring $C_i$.

The $y$-gossip protocol, with $y = 6u+2$, runs as follows. In the beginning, each participating vertex assumes no knowledge about other participants and the topology between them, as well as their colors (the latter could be waived if we assume that vertices know whole colorings $C_1,\ldots,C_x$ from the beginning, not only their own vector of colors, but it depends on their local computational power - to be discussed later). In $y$ subsequent communication rounds, each participant (i.e., participating vertex or a hyperedge containing a participating vertex) broadcasts to all its neighbors its current knowledge about the topology of participants, and updates the topology based on the information received in this round from its participating neighbors.
At the end of the phase - if vertex $v$ finds itself good, i.e., outside of any biased hyperedge in $\cE$,
then it fixes its $i$-th color from the sequence as its final color (and stops participating in the next phases).
Otherwise, if $v$ finds itself in a maximal biased connected $1,2$-component of diameter smaller than $3u$ in the intersection graph, then it uses a Black-Box sequential algorithm to compute locally a
%(as Black-Box is a sequential algorithm) 
bi-chromatic recoloring of the whole biased connected $1,2$-component it belongs to, together with adjacent dangerous non-biased hyperedges. 

The motivation for this part is as follows. We will prove later in the progress analysis in Section~\ref{sec:progress} that it suffices for our algorithm to consider connected $1,2$-components $T$ whose maximal connected $2,3$-component, $R$, has size at most $u-1$, where $u$ is a fixed constant. This means that the diameter of $R$ is at most $3u-3$, and thus the diameter of $T$ is at most $3u-1$. Then if we take into account $T$ together with the intersecting dangerous non-biased hyperedges, denoted by $T'$, then the diameter of $T'$ is at most $3u+1$. Observe that the communication between two intersecting hyperedges with at least one participant in the intersection takes two rounds. This implies that $y = 6u+2$ communication rounds  suffices to discover such structure $T'$, and thus also to recognize that $T$ is the maximal connected $1,2$-component. Therefore the verification of existence of such $T$ of diameter smaller than $3u$ can be checked with $y = 6u + 2$ communication rounds.

The Black-Box algorithm re-colors only vertices without fixed color.
Note however that each hyperedge in the component, including the adjacent dangerous non-biased hyperedges, has at least $\alpha\delta$ such vertices.
Therefore, as we will justify later in Section~\ref{sec:Black-Box}, such a deterministic sequential algorithm exists and could be executed independently and in local memory by each participating vertex in the component or in the intersecting dangerous non-biased hyperedges. 
The common knowledge about this small component and the same Black-Box
algorithm simulated locally assure that these vertices compute the same
colorings and assign themselves colors consistently with this coloring.
Then vertex $v$ fixes the output color
as its final one and stops participating in the next phases.
On the top of that, if all hyperedges to which a vertex 
%does not belong to any participating hyperedge
belongs to, have bi-chromatic fixed-color vertices, 
then it fixes its color and stops participating.
We call vertices that stop participating {\em idle vertices}.
Otherwise, it proceeds to the next phase.

A hyperedge in $\cE$ participates in the next phase only if 
its fixed-color part is monochromatic and
some of its vertices are still participating, i.e., have not fixed their colors in the previous phases, and they only participate in the $y$-gossip (by passing by the information about the topology and coloring
of participating vertices and hyperedges). Otherwise the hyperedge stops
participating and we call it {\em idle} from that point.

Note that once a vertex or a hyperedge becomes idle, it stays idle till
the end of the execution (because the colors of the vertex/vertices has/have been fixed).
Let $V_i$ denote the set of non-idle vertices in $V$ at the end of phase $i$,
and correspondingly, let $\cE_i$ be the set of non-idle hyperedges in $\cE$. %that contain some non-idle vertex at the end of phase $i$ (we call them {\em non-idle hyperedges}. 
%Note that, by the algorithm, $U_i$ is the set of all neighbors of $V_i$.
For technical reason, we define $V_0=V$ and $\cE_0=\cE$.

%[PIOTR: tutaj używamy oznaczenia $U_i$ na non-idle hyperedges, motywowane grafem, a potem $\cE_i$ motywowane hipergrafem; trzeba ujednolicić]

\subsection{Black-Box coloring algorithm}
\label{sec:Black-Box}

Let us focus on a single maximal connected $1,2$-component $T$ of biased hyperedges together with its intersecting dangerous non-biased hyperedges in phase $i$; call it $T'$. Suppose also that all undecided vertices in $T'$ have found that $T$ is of diameter smaller than $3u$ in the intersection graph; 
later we will show in the progress analysis in Section~\ref{sec:progress} that indeed it suffices to consider connected $1,2$-components of diameter at most $3u - 1$ in the intersection graph, where $u$ is a fixed constant, and that $u-1$ will be the upper bound on the size of the maximal connected $2,3$-component of biased hyperedges in $T$. We note here that each undecided vertex $v$ explores a slightly larger diameter of $3u + 1$ around it (thus it uses $y = 6u+2$ communication rounds) to also discover $T'$ and to make sure that $T$ is indeed maximal --- it is the case if $v$ discovers at distance $3u + 1$ in the intersection graph only non-biased hyperedges.

We first observe that it follows from the Lov\'{a}sz Local Lemma, c.f., \cite{Alon91}, that there exists a bi-chromatic coloring of $T$ (coloring only undecided vertices), because $n \leq 2^{\alpha \delta}$. Then, the locally executed (sequential) coloring algorithm could be defined as follows. 
Suppose that $v$ is any undecided vertex 
%either 
in one of the hyperedges of $T'$. 
%or in one of the non-biased dangerous hyperedges intersecting $T$. First, each such undecided vertex $v$ uses $y=O(u)$ rounds of communication to learn and store in its local memory the complete topology of $T$. 
%Then 
Based on the gathered knowledge about $T'$ during $y$ communication rounds, vertex $v$ uses any deterministic (polynomial time) hypergraph bi-chromatic coloring algorithm (for instance from \cite{Alon91,Heupler2013}) to color its local copy of $T'$. 
As mentioned before, only the undecided (i.e., participating) vertices are (re-)colored.
Then $v$ decides about its color to be the one assigned to it by this algorithm. This process is guaranteed to produce a correct bi-chromatic coloring of $T'$ by the facts that the coloring algorithm is deterministic, it operates on identical copies of $T'$, and that $T$ is the maximal connected $1,2$-component of biased hyperedges (that is, each hyperedge could be involved in at most one execution of Black-Box algorithm in a given phase).

\section{Analysis of algorithm \localalgo}

The analysis consists of two parts.
First we show structural properties of an execution of the algorithm,
including the mechanisms of changing status of hyperedges and vertices.
In the second part we prove the progress of the algorithm,
leading to successful completion of the coloring task within
$x$ phases.

\subsection{Structural properties of algorithm \localalgo}

In this part we analyze how the algorithm behaves when going from
one phase to the other. We start from analyzing what happens with hyperedges and vertices in the first phase,
based on the findings we formulate a useful invariant 
about partial coloring of hyperedges and prove it by 
induction.

Let us consider the first phase of our algorithm and let $T$ be one of the maximal connected $1,2$-components of biased hyperedges. %
%and the adjacent non-biased dangerous hyperedges 
%
%that is recolored in that phase. 
%
If $T$ has diameter at most $3u-1$ in the original hypergraph
(i.e., the maximum distance between any two hyperedges of $T$
in the intersection graph is at most $3u-1$), then all hyperedges 
in $T$ are recolored by the Black-Box algorithm (applied in each
vertex covered by $T$); we call such component of {\em small-diameter}.
Otherwise, which we call a component of {\em large-diameter},
no vertex covered by $T$ triggers Black-Box and 
thus the hyperedges in $T$ stay non-colored until the next phase.

On the opposite side, if a hyperedge does not belong to any such
component $T$ nor intersects any, it is non-biased and non-dangerous,
all its vertices are good, thus they fixed their colors and become idle at the end of phase $1$,
together with the hyperedge (which becomes bi-chromatic and idle).

Suppose now that $f$ is a hyperedge which intersects one (or more) of hyperedges of $T$, but $f$ itself does not belong to $T$. We consider the following cases:
\begin{itemize}
\item 
$f$ is dangerous and not biased: since $f$ is dangerous, at least $\alpha \delta$ of its vertices are bad (i.e., belong to biased hyperedges). These hyperedges belong to the same maximal connected $1,2$-component of biased hyperedges, by definition of maximal connected $1,2$-component, and by our assumption this component must be $T$. We have now two sub-cases in our algorithm, depending on the diameter of this maximal connected $1,2$-component of biased hyperedges:
  \begin{itemize}
  \item 
  The biased hyperedge(s) that ``provide'' the bad vertices to $f$ belong to a small-diameter maximal connected $1,2$-component of biased hyperedges. These bad vertices will then be recolored, and as explained above, this recoloring is done (by Black-Box) in such a way that makes $f$ bi-chromatic. Even stronger, this recoloring makes bi-chromatic the intersection of the vertex set of $f$ with the set of all vertices that belong to biased hyperedges (that are being recolored), because it is of sufficiently large size of at least $\alpha\delta$. After this phase, our algorithm fixes the colors of those recolored vertices, and therefore from this phase onwards, $f$ will remain bi-chromatic and idle. 
  \item 
  The biased hyperedge(s) that ``provide'' the bad vertices to $f$ belong to a large-diameter maximal connected $1,2$-component of biased hyperedges. In this case in the current phase these bad vertices are not recolored. 
%  and we need to make sure that in the next phases the probability that there exists a valid bi-chromatic recoloring is kept sufficiently high. We achieve this by the following argument. We keep the probability $1-p_i$ that at the end of phase $i$ there is a bi-chromatic coloring of currently bad (and not recolored in the current phase) vertices extending the fixed vertex coloring. Then, in the next phase, such hyperedge (with bad vertices from phase $i$ newly colored by random color) becomes either not dangerous, or dangerous, or ...
%  Among these cases, only the analogous case does not guarantee immediate bi-chromatic coloring of this hyperedge, however, by the lemma and union bound with the event held at the end of phase $i$, the probability that there exists a bi-chromatic coloring extending the one fixed by the end of phase $i+1$ is at least
%  $1-p_{i+1}\ge 1- p_i \cdot ???$. 
%  ... [?]
  \end{itemize} 
\item 
$f$ is dangerous and biased: impossible, because otherwise $f$ would belong to the maximal connected $1,2$-component $T$.
\item 
$f$ is not dangerous: this means that $f$ is not biased. 
%Assuming that $f = N(u) \in \cE$, for some $u \in U$, 
This implies that there are at least $\alpha \delta$ red vertices and at least  $\alpha \delta$ blue vertices in $f$.
%$N(u)$. 
The fact that $f$ is not dangerous implies that 
%$N(u)$ 
it 
has less than $\alpha \delta$ bad vertices. 
%that belong to bad hyperedges. 
We can now simply fix the colors of all good vertices in 
$f$.
%$N(u)$. 
%
%except those that belong to biased hyperedges. 
Because our algorithm could only recolor (in the future) vertices that are bad in the current phase,
%belong to bad hyperedges, 
$f$ will remain bi-chromatic until the end of the algorithm.
\end{itemize}

Let us now consider any other phase $i>1$ of our algorithm. 
Recall that we denoted by $V_i\subseteq V$ the set of vertices with colors not fixed at the end of phase $i$, and by analogy, by $\cE_i\subseteq \cE$ the set of hyperedges that are monochromatically partly colored (i.e., non-idle) at the end of phase $i$. 

Assume that the following {\bf invariant} holds at the beginning of phase $i$: for each hyperedge $f\in \cE_i$
\begin{itemize}
	\item[(a)] 
$f\setminus V_i$ is monochromatically colored (fixed same color by all nodes in $f\setminus V_i$), and
	\item[(b)] 
%	$|V_i|\le (1-???)^{i-1} n$.
	$|f\cap V_i|\ge \alpha\delta$.
\end{itemize}

We will show that the invariant also holds for phase $i+1\le x$.

%We first focus on proving (a).
The proof is a modification of the above analysis of phase one, by taking into account partial coloring (fixed colors by some nodes) restricted by the invariant.
We analyze several complementary cases and argue
that in each of them hyperedges will satisfy
the invariant at the end of phase $i+1$.

Let $T$ be one of the maximal connected $1,2$-component of biased hyperedges in phase $i+1$. Recall that each non-idle hyperedge 
%(i.e., all fixed-color vertices have the same color and at least one has not fixed its color yet)
in phase $i+1$ may consist of nodes with (same) color fixed in the previous phases and the remaining ones temporarily adopting colors from coloring $C_{i+1}$.

If $T$ has small-diameter (i.e., at most $3u-1$ in the original hypergraph),
%(i.e., the maximum distance between any two hyperedges of $C$ in the intersection graph is smaller than $3u$), 
then all hyperedges 
in $T$ are recolored by the Black-Box algorithm (applied in each vertex covered by $T$), as by the invariant for $i$,
each involved hyperedge has at least $\alpha\delta$
vertices that could be recolored (which is enough
to obtain bi-chromatic coloring, see Section~\ref{sec:Black-Box}). 
Thus, these hyperedges become idle at the end of phase $i+1$
and the invariant holds trivially for them.
Otherwise, 
no vertex covered by $T$ triggers Black-Box and 
thus the hyperedges in $T$ and their vertices stay exactly as they were in the beginning of phase $i+1$; 
thus, the hyperedges satisfy the invariant at the end of phase $i+1$
as they satisfied it at the end of phase $i$.

On the opposite side, if a hyperedge does not belong to any such
component $T$ nor intersects any, it is non-biased and non-dangerous,
all its vertices are good, thus those of them which are non-idle fix their colors and become idle at the end of phase $i+1$,
together with the hyperedge (which becomes bi-chromatic and idle).

Suppose now that $f$ is a hyperedge which intersects one (or more) of hyperedges of $T$, but $f$ itself does not belong to $T$. We consider the following cases:
\begin{itemize}
\item 
$f$ is dangerous and not biased: since $f$ is dangerous, at least $\alpha \delta$ of its vertices are bad (i.e., belong to biased hyperedges). These hyperedges belong to the same maximal connected $1,2$-component of biased hyperedges, by definition of maximal connected $1,2$-component, and by our assumption this component must be $T$. We have now two sub-cases in our algorithm, depending on the diameter of this maximal connected $1,2$-component of biased hyperedges:
  \begin{itemize}
  \item 
  The biased hyperedge(s) that ``provide'' the bad vertices to $f$ belong to a small-diameter maximal connected $1,2$-component of biased hyperedges. 
  In this case, consider vertices of $f$ that have not fixed their color. They will then be recolored by the Black-Box, together with the other similar vertices in the $1,2$-component and adjacent non-biased dangerous hyperedges, as the component is of small-diameter. 
  There are at least $\alpha\delta$ of them in $f$, by the invariant, therefore the recoloring makes them (and thus $f$) bi-chromatic, see Section~\ref{sec:Black-Box}.
  After this phase, our algorithm fixes the colors of those recolored vertices, and therefore from this phase onwards, $f$ will remain bi-chromatic and idle. 
  \item 
  The biased hyperedge(s) that ``provide'' the bad vertices to $f$ belong to a large-diameter maximal connected $1,2$-component of biased hyperedges. In this case in the current phase these undecided bad vertices are not recolored. 
  Thus, $f$ and its vertices stay exactly as they were in the beginning of phase $i+1$, and because they 
satisfied the invariant at the end of phase $i$,
they also satisfy it at the end of phase $i+1$.
%
%  and we need to make sure that in the next phases the probability that there exists a valid bi-chromatic recoloring is kept sufficiently high. We achieve this by the following argument. We keep the probability $1-p_i$ that at the end of phase $i$ there is a bi-chromatic coloring of currently bad (and not recolored in the current phase) vertices extending the fixed vertex coloring. Then, in the next phase, such hyperedge (with bad vertices from phase $i$ newly colored by random color) becomes either not dangerous, or dangerous, or ...
%  Among these cases, only the analogous case does not guarantee immediate bi-chromatic coloring of this hyperedge, however, by the lemma and union bound with the event held at the end of phase $i$, the probability that there exists a bi-chromatic coloring extending the one fixed by the end of phase $i+1$ is at least
%  $1-p_{i+1}\ge 1- p_i \cdot ???$. 
%  ... [?]
  \end{itemize} 
\item 
$f$ is dangerous and biased: impossible, because otherwise $f$ would belong to the maximal connected $1,2$-component $T$.
\item 
$f$ is not dangerous: this means that $f$ is not biased. 
%Assuming that $f = N(u) \in \cE$, for some $u \in U$, 
This implies that there are at least $\alpha \delta$ red vertices and at least  $\alpha \delta$ blue vertices in $f$.
%$N(u)$. 
The fact that $f$ is not dangerous implies that 
%$N(u)$ 
it 
has less than $\alpha \delta$ bad vertices. 
%that belong to bad hyperedges. 
We can now simply fix the colors of all good vertices in 
$f$ which has not fixed their color before.
%$N(u)$. 
%
%except those that belong to biased hyperedges. 
Because our algorithm could only recolor (in the future) vertices that are bad (and have not fixed their color yet) in the current phase,
%belong to bad hyperedges, 
$f$ will remain bi-chromatic until the end of the algorithm.
\end{itemize}

Thus, applying an inductive argument, we obtain the following fact.

\begin{lemma}
\label{lem:invariant}
The invariant holds at the end of any phase $i\le x$,
for any execution of algorithm \localalgo.
\end{lemma}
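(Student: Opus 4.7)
The plan is to proceed by induction on the phase number $i$, using as the inductive mechanism the detailed case analysis that has just been carried out in the text preceding the lemma statement. The base case $i=0$ is immediate: by definition $V_0=V$ and $\cE_0=\cE$, so for any $f\in \cE_0$ the set $f\setminus V_0$ is empty (making part (a) vacuous) and $|f\cap V_0|=|f|=\delta\ge \alpha\delta$ (giving part (b), since initially every hyperedge has size exactly $\delta$ by the standing assumption).

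For the inductive step, I would fix an arbitrary hyperedge $f\in \cE_{i+1}$ and trace its fate through the five complementary situations identified in the preceding analysis: (i) $f$ lies in a small-diameter maximal connected $1,2$-component $T$ of biased hyperedges in phase $i+1$; (ii) $f$ lies in such a $T$ of large diameter; (iii) $f$ neither belongs to nor intersects any such $T$; (iv) $f$ intersects some $T$ without lying in it, subdivided according to the diameter of $T$; and (v) $f$ is non-dangerous. In cases (i), (iii), (iv)-small-diameter and (v), the algorithm fixes the colors of all remaining non-idle vertices of $f$ and $f$ becomes idle, i.e.\ $f\notin \cE_{i+1}$, so these situations contribute no hyperedges to $\cE_{i+1}$ and require no further check. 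In the remaining cases (ii) and (iv)-large-diameter the algorithm does not touch any vertex of $f$ during phase $i+1$; consequently $f\cap V_{i+1}=f\cap V_i$ and $f\setminus V_{i+1}=f\setminus V_i$, and both parts of the invariant are inherited verbatim from the inductive hypothesis.

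The main point that deserves care is verifying that in the small-diameter cases (i) and (iv)-small the Black-Box recoloring really does legitimately remove $f$ from $\cE_{i+1}$ in a bi-chromatic state. Here I would invoke part (b) of the inductive hypothesis, which says $|f\cap V_i|\ge \alpha\delta$, so $f$ still offers at least $\alpha\delta$ undecided vertices that Black-Box may recolor. This is precisely the size precondition that Section~\ref{sec:Black-Box} relies on: since $n\le 2^{\alpha\delta}$, the Lov\'asz Local Lemma guarantees the existence of a bi-chromatic coloring of the relevant structure $T'$ (restricted to undecided vertices), and the deterministic Black-Box routine, executed on identical local copies of $T'$, finds such a coloring consistently across all participants. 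All undecided vertices of $f$ then simultaneously fix their Black-Box-assigned colors, making $f$ bi-chromatic and idle. I expect this to be the main obstacle, as it is the only step where one must combine the inductive hypothesis with the guarantees of the Black-Box subroutine and with the diameter bound justifying the $y=6u+2$ gossip; everything else reduces to bookkeeping about which vertices are touched in phase $i+1$. Assembling the cases and invoking induction then yields the invariant at the end of every phase $i\le x$, as required.
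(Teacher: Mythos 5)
Your proposal is correct and follows essentially the same route as the paper: an induction over phases whose inductive step is exactly the case analysis (small/large-diameter components, intersecting dangerous non-biased hyperedges, non-dangerous hyperedges), with part (b) of the invariant supplying the $\alpha\delta$ undecided vertices needed for the Black-Box/LLL recoloring. One small slip worth noting: in the non-dangerous case the algorithm fixes only the \emph{good} undecided vertices of $f$ (its few bad vertices may stay undecided), but $f$ still becomes idle because its fixed part is already bi-chromatic, so your conclusion that such $f$ leaves $\cE_{i+1}$ stands.
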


\ignore{

\subsection{Analysis}

We start from analyzing the behavior of a single phase $i$ of the algorithm for random red-blue coloring $C_i$ of nodes in $V$,
i.e., where each node in each coloring selects its color red or blue with
probability $1/2$ each, independently on other nodes and colorings.
Ideally, we would the following invariant to hold
in the beginning of epoch $i$, with sufficiently high probability:
\begin{itemize}
\item%[(i)] 
nodes in
$U'\cup V'$ are not idle, for some sets 
$U'\subseteq U$ and $V'\subseteq V$
\item%[(ii)]
all idle nodes in $U$, i.e., in $U\setminus U'$,
are bi-colored (i.e., have both red and blue neighbors in $V$), and 
\item%[(iii)]
there is a bi-chromatic coloring of the subgraph
induced by $U'\cup V'$.
\end{itemize}
However, that invariant could be difficult to prove by induction,
due to the complex nature of the problem and distributed setting.
Therefore, we strengthen it as follows.
%Assume, by induction, that the following invariant holds
In the beginning of epoch $i$:
\begin{description}
\item[(i)] 
$|V_i|\le |V_{i-1}|/2$ 
	nodes in
	$U'\cup V'$ are not idle, for some sets 
	$U'\subseteq U$ and $V'\subseteq V$
	\item[(ii)]
	all idle nodes in $U$, i.e., in $U\setminus U'$,
	are bi-colored (i.e., have both red and blue neighbors in $V$), and 
	\item[(iii)]
	there is a bi-chromatic coloring of the subgraph
	induced by $U'\cup V'$.
\end{description}

\begin{lemma}
	\label{lem:random-phase}
If Property 
	If the invariant holds for phase $i< x$ with some probability $p$ 
	then it also holds for phase $i+1$ with probability at least $???$.
\end{lemma}

\begin{proof}
Assume that the invariant holds for phase $i< x$ with some probability $p$. Consider random red-blue coloring $C_i$ of nodes in $V'$;
we ignore colors defined by $C_i$ for nodes in $V\setminus V'$ because these nodes already have fixed colors.
Let $G'$ be the graph induced by nodes in $V'$ and their neighbors $W'$.

Following Alon~\cite{}, we

	i.e., each node in each coloring selects its color red or blue with
	probability $1/2$ each, independently on other nodes and colorings.
	
\end{proof}

Next, we analyze the behavior of the algorithm for 
random red-blue colorings $C_1,\ldots,C_x$ of nodes in $V$,
i.e., where each node in each coloring selects its color red or blue with
probability $1/2$ each, independently on other nodes and colorings.

\begin{lemma}
	\label{lem:random-all}
	There are $x=O(\log^2 n)$ red-blue colorings $C_1,\ldots,C_x$
	of set $V$ such that the algorithm runs correctly for any instance
	graph/network for $y=O(\delta)$, and thus in time $O(xy)\subseteq O()$.
\end{lemma}

\begin{proof}
We first provide analysis of the algorithm initialized by random red-blue colorings red-blue colorings $C_1,\ldots,C_x$,
and then derandomize this analysis.

Consider random red-blue colorings $C_1,\ldots,C_x$ of nodes in $V$,
i.e., each node in each coloring selects its color red or blue with
probability $1/2$ each, independently on other nodes and colorings.

\end{proof}

\begin{theorem}
	There are $x=O(\log^2 n)$ red-blue colorings $C_1,\ldots,C_x$
	of set $V$ such that the algorithm runs correctly for any instance
	graph/network for $y=O(\delta)$, and thus in time $O(xy)\subseteq O()$.
\end{theorem}

\begin{proof}
	We first provide analysis of the algorithm initialized by random red-blue colorings red-blue colorings $C_1,\ldots,C_x$,
	and then derandomize this analysis.
	
	Consider random red-blue colorings $C_1,\ldots,C_x$ of nodes in $V$,
	i.e., each node in each coloring selects its color red or blue with
	probability $1/2$ each, independently on other nodes and colorings.
	
\end{proof}

} %%% end of ignore

\subsection{Quantifying progress of algorithm \localalgo} 
\label{sec:progress}

Consider an admissible hypergraph $H$. We will recall some definitions here. A connected $2,3$-component is a set of hyperedges that is connected in the corresponding $2,3$-intersection graph. A $1,2$-dominating set of a given connected $1,2$-component is a subset of this component such that each hyperedge in the component is either in this subset or adjacent to it in the corresponding $1,2$-intersection graph. An $a,a+1$-intersection graph of hypergraph $H$, for an integer $a \geq 1$, is a graph whose vertices are hyperedges of $H$ and it has an edge between two of its vertices if their distance in the intersection graph of $H$ is at least $a$ and at most $a+1$.

A {\em $2,3$-skeleton of a given connected $1,2$-component} is a subset of hyperedges of the component forming a maximal connected $2,3$-component and being a $1,2$-dominating set with respect to the $1,2$-component. For an arbitrary (not necessarily connected) $1,2$-component, a $2,3$-skeleton is a collection of $2,3$-skeletons over all connected parts of the component.

We will use the notion of a $2,3$-skeleton mainly in context of a $1,2$-component of biased hyperedges, that is, with respect to a given coloring (phase) and where all hyperedges of such $2,3$-skeleton are biased. However, sometimes, we will also use the notion of a $2,3$-skeleton in the context of only a given topology of a hypergraph and without assuming existence of a coloring.

%A {\em $2,3$-skeleton of a given connected $1,2$-component} is a $2,3$-connected component with respect to the original hypergraph, and it is also $1,2$-dominating set with respect to the $1,2$-component.

Consider all admissible hypergraphs of $n$ vertices and $n$ hyperedges, and partition them into classes $\cF_k$, for any $k\le n$,
%being a power of $2$, 
such that $\cF_k$ contains all admissible hypergraphs with the maximum size $2,3$-skeleton of size equal to $k$. Let $\cS_k$ be the set of all such possible $2,3$-skeletons of size $k$ in all admissible topologies of hypergraphs in $\cF_k$. For each $S\in \cS_k$, let $J_S$ be the set of all hypergraphs 
in $\cF_k$ having $S$ among their maximum size $2,3$-skeletons.
Note that $|\cS_k|\le \binom{n}{k}$.

Let us partition the execution of the algorithm \localalgo~into consecutive epochs, each consisting of $\beta$ subsequent phases, where we will use $\beta = \log n$. We call a hyperedge {\em successful in the epoch} if by the end of this epoch it had status non-biased or biased and belonging to a small (i.e., of size smaller than $u$ -- our fixed constant) $2,3$-skeleton of some 
maximal connected $1,2$-component of biased hyperedges; otherwise we call it {\em unsuccessful}.
   We will first need the following technical claim.

\begin{claim}\label{c:prob-cond}
 For any fixed hyperedge in the input hypergraph, the probability that it is unsuccessful in the consecutive phases $1,2,\ldots, \ell$ with a given epoch is at most $2^{-c' \alpha \delta \ell}$, where $\ell \leq \beta$ and $c' > 0$ is an absolute constant independent from $n$.
\end{claim}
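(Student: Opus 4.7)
The plan is to decompose the probability by the chain rule across the $\ell$ phases and bound each conditional factor. Let $A_j$ be the event ``in phase $j$, $f$ is biased and the $2,3$-skeleton of the maximal connected $1,2$-component of biased hyperedges containing $f$ has size at least $u$''; then being unsuccessful in phases $1,\ldots,\ell$ coincides with $A_1\cap\cdots\cap A_\ell$, and
\[
\Pr{\bigcap_{j=1}^{\ell} A_j}\;=\;\prod_{j=1}^{\ell}\Pr{A_j\mid A_1\cap\cdots\cap A_{j-1}},
\]
so it suffices to bound each conditional factor by $2^{-c'\alpha\delta}$.

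For a single factor, I would condition on the entire history $C_1,\ldots,C_{j-1}$. This determines $\cE_{j-1}$ and, by Lemma~\ref{lem:invariant}, gives each of its members a monochromatic fixed part together with at least $\alpha\delta$ non-fixed vertices; the coloring $C_j$ is drawn freshly and independently of the history (by the Limited Randomness remark, only $O(u\delta)$-wise independence is needed for what follows). The event $A_j$ demands a witness $S\ni f$ of $u$ pairwise vertex-disjoint biased hyperedges forming a connected $2,3$-component. Union-bounding over candidates $S$, the tree-counting argument used in the proof of Lemma~\ref{l:123-sets} (maximum degree $D\le d^3\le 2^{3\delta/500}$ in the ``special'' $2,3$-intersection graph) gives at most $(eD)^{u-1}\le 2^{3u\delta/500+O(u)}$ choices. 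For a fixed $S$, pairwise vertex-disjointness of its $u$ hyperedges makes their biased events mutually independent under $C_j$, and each is bounded by a single-hyperedge biased probability $p\le 2\cdot 2^{-(1-h(\alpha))\delta}$. Combining,
\[
\Pr{A_j\mid\mathrm{history}}\;\le\;(eD)^{u-1}p^u\;\le\;2^{\,3u\delta/500+O(u)-u(1-h(\alpha))\delta}\;\le\;2^{-c'\alpha\delta},
\]
for $u$ a sufficiently large absolute constant (with $\alpha=1/8$, which keeps $(1-h(\alpha))\alpha$ well above $3/500$). Multiplying the $\ell$ factors yields the claim.

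The main obstacle is justifying the single-hyperedge biased-probability bound $p\le 2\cdot 2^{-(1-h(\alpha))\delta}$ uniformly over possible histories: the invariant only guarantees $\alpha\delta$ non-fixed vertices, and a hyperedge whose fixed monochromatic part already fills $(1-\alpha)\delta$ positions would be deterministically biased, so a naive use of the invariant is not enough. To close this gap I would lean on the structural case analysis preceding Lemma~\ref{lem:invariant}: the only route for a hyperedge to remain non-idle across several phases is to have sat, in each such phase, inside a large-diameter $1,2$-component, in which case no Black-Box execution touches any of its vertices; moreover, when the hyperedge is itself biased, all of its vertices are bad, so none of them is ever fixed as a ``good'' vertex either. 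Consequently every hyperedge appearing in any candidate witness $S$ at phase $j$ still has all $\delta$ of its vertices non-fixed, which makes the standard binomial bound on $p$ applicable and closes the argument.
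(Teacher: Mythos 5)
Your chain-rule skeleton and the per-phase target $2^{-c'\alpha\delta}$ match the structure of the paper's proof, but the patch you add to justify the per-hyperedge bias bound under conditioning has a genuine gap. The assertion that every hyperedge appearing in a candidate witness $S$ at phase $j$ still has all $\delta$ of its vertices unfixed is false. A hyperedge can stay non-idle without ever having been biased: a dangerous, non-biased hyperedge $g$ intersecting a large-diameter component has its good vertices fixed at the end of the phase (good vertices always fix), and if those good vertices happen to be all of one color, $g$ keeps a monochromatic fixed part of size close to $(1-\alpha)\delta$ and remains in $\cE_i$ with only slightly more than $\alpha\delta$ unfixed vertices --- exactly the configuration your ``main obstacle'' paragraph worries about. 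In the next phase such a $g$ is biased with probability at least $1-2^{-(\alpha\delta+1)}$ (to avoid being biased it would need essentially all of its free vertices to receive the minority color), and nothing prevents $g$ from lying in the $2,3$-skeleton witness around $f$. Your justification (``the only route to remain non-idle is to have sat inside a large-diameter component while biased, hence all vertices bad, hence never fixed'') is valid for the conditioned hyperedge $f$, which the events $A_1,\ldots,A_{j-1}$ force to be biased in every earlier phase, but it says nothing about the other $u-1$ members of $S$, whose past statuses are unconstrained. Hence the factor $p^u$ in $(eD)^{u-1}p^u$ is not justified: for some witness members the conditional bias probability can be essentially $1$.

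The repair is more modest, and it is what the paper's proof actually does: no witness counting is needed in the conditional step at all. Since $A_j$ already implies that $f$ itself is biased under $C_j$, it suffices to bound $\Pr{A_j \mid A_1\cap\cdots\cap A_{j-1}}$ by the probability that $f$ alone is biased under the fresh coloring; the paper invokes the invariant of Lemma~\ref{lem:invariant} (at least $\alpha\delta$ of $f$'s vertices are re-colored freshly in phase $j$) and repeats the phase-one computation with $\delta$ degraded to $\alpha\delta$, which is precisely where the per-phase exponent $\alpha\delta$ in the claim comes from. Your correct observation about $f$ --- that under the conditioning none of $f$'s vertices is ever fixed, because in each conditioned phase $f$ is biased (so all its vertices are bad and no good-vertex fixing applies) and its component is either of large diameter, so no Black-Box touches it, or else $f$ becomes idle and the subsequent events are impossible --- is exactly what legitimizes this single-hyperedge bound; applied to $f$ alone it closes the argument without any disjointness/independence machinery, whereas extended to all of $S$ it is false.
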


\begin{proof}
  We will show here an argument for $\ell = 2$, which then can easily be extended to any $\ell$. Let $A_i$ denote the event that the fixed hyperedge $f$ is unsuccessful in phase $i$, meaning that $f$ is one of the biased hyperedges belonging to a large (i.e., of size at least $u$) $2,3$-skeleton of some connected $1,2$-component of biased hyperedges with respect to the coloring of phase $i$.
  
  As we argue in the proof of Lemma \ref{l:123-sets}, $f$ is unsuccessful in the first phase with probability at most
  $$
    \Pr{A_1} \leq n \cdot 2^{-c \delta u}
    \ ,
  $$ 
  for some constant $0 < c < 1$. Thus, this probability is also at most $2^{-c' \delta u}$, where we assume that the constant $c'$ is only slightly smaller than $c$. And to be even more conservative, this probability is also at most $2^{-c' \delta}$ because $u>2$ (this uses only randomness of colorings of the vertices in $f$). 
%  [PIOTR: moze jednak zmienic nazwy stalych, albo od razu zrezygnować z $u$ wtedy można zostawić taką samą stałą pod warunkiem że $u>1$.]
  
  Now, consider the second phase and the event $A_2$. This event can only occur if event $A_1$ occurred in the previous phase. We need to show an upper bound on the probability of the event $A_1 \cap A_2$, and we do it using the following observation:
  %thus we have that
  $$
    \Pr{A_1 \cap A_2} = \Pr{A_1} \cdot \frac{\Pr{A_1 \cap A_2}}{\Pr{A_1}} = \Pr{A_1} \cdot \Pr{A_2 | A_1}
    \ .
  $$ 
  By our algorithm, 
  %only in the worst case 
  only a subset of hyperedge $f$ (and of other hyperedges) of size (at least) $\alpha \delta$ is (are) ``transferred" to the second phase in order to be colored by new coloring, c.f., Lemma~\ref{lem:invariant}. Therefore, we can (conservatively) upper bound the conditional probability $\Pr{A_2 | A_1}$ by using the same argument as for $\Pr{A_1}$ as follows:
  $$
    \Pr{A_2 | A_1} \leq 2^{-c' \alpha \delta}
    \ ,
  $$ 
%  [DAREK: is seems to me that we don't need to include $\binom{\delta}{\alpha\delta}$ in calculation of conditional probability.] 
  which implies that 
  $$
    \Pr{A_1 \cap A_2} \leq 2^{-2 c' \alpha \delta}
    \ .
  $$ 
  Repeating this reasoning $\ell-1$ times we will obtain that
  $$
    \Pr{A_1 \cap A_2 \cap \cdots \cap A_{\ell}} \leq 2^{- \ell c' \alpha \delta}
    \ .
  $$
\end{proof}

\begin{lemma}
\label{lem:localalgo}
The probability of the event within an epoch that there exists an input hypergraph in $\cF_k$ with a set 
of more than $k/2$ unsuccessful hyperedges forming a $2,3$-skeleton of some $1,2$-component of biased hyperedges in the input hypergraph, is at most $2^{-c''k\alpha\delta\beta}$, for some absolute constant $c''>0$ which is independent from $n$.
\end{lemma}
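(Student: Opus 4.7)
The plan is to combine Claim~\ref{c:prob-cond} (applied with $\ell=\beta$) with a two-level union bound, exploiting the fact that the hyperedges of a $2,3$-skeleton are pairwise disjoint. First, I would apply an outer union bound over the at most $\binom{n}{k}$ possible $2,3$-skeletons $S\in\cS_k$ of size $k$. This is the natural indexing set because any bad witness --- namely, a $2,3$-skeleton $T$ of a $1,2$-component of biased hyperedges with $|T|>k/2$ whose hyperedges are all unsuccessful --- lies inside some hypergraph in $\cF_k$, which by definition carries a size-$k$ $2,3$-skeleton $S\in\cS_k$; after absorbing $T$ into such an $S$, it suffices to bound, for each fixed $S\in\cS_k$, the probability that more than $k/2$ of its hyperedges become unsuccessful during the epoch.

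Second, for a fixed $S$, I would union-bound over the at most $2^k$ subsets $T\subseteq S$ with $|T|>k/2$. For any such $T$, the hyperedges are pairwise disjoint by the defining property of a $2,3$-component, so the event ``$f$ is unsuccessful in all $\beta$ phases of the epoch'' depends only on the per-phase colorings of the vertices of $f$, and these vertex sets are disjoint across distinct $f\in T$. Exploiting independence across phases (already built into the iterative conditional argument inside Claim~\ref{c:prob-cond}) together with independence across distinct hyperedges of $T$, I would multiply the per-hyperedge bounds from Claim~\ref{c:prob-cond} with $\ell=\beta$ and conclude that the probability that every hyperedge of $T$ is unsuccessful throughout the epoch is at most $(2^{-c'\alpha\delta\beta})^{|T|}\le 2^{-c'\alpha\delta\beta\,k/2}$.

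Combining the three factors yields a total probability bounded by
\[
\binom{n}{k}\cdot 2^k\cdot 2^{-c'\alpha\delta\beta\,k/2}
\;\le\; 2^{k\log n + k - c'\alpha\delta\beta\,k/2}.
\]
Because $\beta=\Theta(\log n)$ and $\delta\ge c\log^2 n$, one has $\alpha\delta\beta\gg\log n+1$, so the additive $k\log n+k$ term can be absorbed into the exponent by choosing any constant $c''\in(0,c'/2)$ sufficiently small, which yields the claimed bound $2^{-c''k\alpha\delta\beta}$. The step I expect to be most delicate is the independence justification: it crucially relies on pairwise disjointness of hyperedges within a $2,3$-skeleton (so their per-phase randomness lives on disjoint vertex sets) together with independence of colorings across phases. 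This is exactly the structural reason the earlier ``limited randomness'' remark carries over, since the whole calculation involves the vertices of at most one skeleton at a time, totalling $k\delta$ vertices, so $k\delta$-wise independence of each per-phase coloring will later suffice to rerun the same argument when the colorings are derandomized.
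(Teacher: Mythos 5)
Your proposal follows essentially the same route as the paper's proof: an outer union bound over the at most $\binom{n}{k}$ skeletons in $\cS_k$, an inner bound of the form $\binom{k}{k/2}\rho^{k/2}$ (you use $2^k\rho^{k/2}$) obtained from Claim~\ref{c:prob-cond} with $\ell=\beta$ together with the pairwise disjointness of skeleton hyperedges, and absorption of the $k\log(n)+k$ terms into the exponent using $\alpha\delta\beta\gg\log n$ to get $2^{-c''k\alpha\delta\beta}$; your ``absorbing $T$ into some $S\in\cS_k$'' step is exactly the paper's containment of the lemma's event $L_1$ in the event $L_0$. The argument is correct and matches the paper, with your explicit justification of independence across disjoint skeleton hyperedges merely making precise what the paper leaves implicit.
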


\begin{proof}
The proof is for the first epoch only, but could easily be generalized into next epochs.

Consider a set $S\in \cS_k$.
The probability that more than half of hyperedges in $S$ will be unsuccessful in an epoch is at most
\[
\binom{k}{k/2} \rho^{k/2}
\ ,
\]
where $\rho = 2^{-c'\alpha\delta \beta}$ by Claim~\ref{c:prob-cond}.
%[PIOTR: $q$ suggests probability complementary to $p$, maybe we could come up with a better name?]
%and $k$ is the size of the maximum size $2,3$-skeleton in $S$. 
This could further be upper bounded by
\[
2^{k-(\log k)/2-(c'/2)k\alpha\delta\beta}
\ .
\]

Next, the probability of the event, call it $L_0$,
that there is $S\in \cS_k$
such that more than half of hyperedges in $S$
will be unsuccessful in the epoch, is at most
\[
|\cS_k| \cdot 2^{k-(\log k)/2-(c'/2)k\alpha\delta\beta}
\le
2^{k\log(en/k) + k-(\log k)/2-(c'/2)k\alpha\delta\beta}
\le
2^{-c''k\alpha\delta\beta}
\ ,
\]
since $|\cS_k|\le \binom{n}{k}$ and $\log(en/k) + 1-(c'/2)\alpha\delta\beta<-c''\alpha\delta\beta$ due to definitions of $\alpha,\beta,\delta$ and for some suitable
constant $0<c''<c'/2$.

Finally, we prove that the event of the lemma, let us denote this event by $L_1$, holds with probability at most $2^{-c''k\alpha\delta\beta}$.
We prove it by showing that this event is contained in event
$L_0$.
Consider a hypergraph in $\cF_k$. It belongs to some sets $J_S \subseteq \cF_k$ with $S\in \cS_k$, namely, to those where $S$ is a maximum
size $2,3$-skeleton for the hypergraph. 
If the event $L_1$ holds, there is a $2,3$-skeleton $S$ of $k$ hyperedges such that the hypergraph is in $J_S$
and more than $k/2$ hyperedges in $S$ are biased.
This means that the event $L_0$ holds.
%The probability of the event, let us denote it as $B$, that during the epoch there is a set of (partly-colored, i.e., some vertices have their color fixed in the previous phases) biased hyperedges forming $2,3$-skeleton larger than $k/2$ is upper bounded by the above probability that there is an $S\in \cS_k$ such that more than half of hyperedges in $S$ will be unsuccessful in an epoch. This upper bound folows simply by observing that $A \subseteq B$.
Therefore, the probability of $L_1$ is upper bounded by the
computed above probability of $L_0$, which is 
$2^{-c''k\alpha\delta\beta}$.

Note that here hyperedges that are non-biased and dangerous are not part of such a $2,3$-skeleton, as they lay in a boundary of $1,2$-components, and therefore a maximum size $2,3$-skeleton contains only edges that are biased.
\end{proof}

\begin{theorem}
\label{thm:localalgo}
For any parameter $n$ and any admissible hypergraph of $n$ vertices and $n$ hyperedges, Algorithm \localalgo{} instantiated with local random colorings terminates successfully in $\beta\log n$ rounds with positive probability, where $\beta = \Theta(\log n)$.
\end{theorem}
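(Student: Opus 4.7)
The plan is to combine Lemma~\ref{lem:localalgo} with an epoch-by-epoch shrinking argument for the size of the maximum $2,3$-skeleton of biased hyperedges. The governing intuition is that the Black-Box already resolves every $1,2$-component whose $2,3$-skeleton has size less than $u$, so it suffices to drive this maximum size below $u$ over the course of the $\beta=\log n$ epochs. Lemma~\ref{lem:localalgo} will allow me to show that in any single epoch the maximum $2,3$-skeleton at least halves with overwhelming probability, so starting from the trivial initial value at most $n$, it drops below $u$ after $\log n$ epochs, and a single subsequent phase of Black-Box finishes the job.

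Concretely, I would let $H_j$ denote the non-idle sub-hypergraph at the start of epoch $j$, and let $k_j$ be the size of its largest $2,3$-skeleton, so $k_0\le n$. Then I would apply Lemma~\ref{lem:localalgo} to every size $k\ge u$ simultaneously, union-bounding over all $\beta=\log n$ epochs and all sizes $k\ge u$. Because $\alpha\delta\beta=\Omega(\log^3 n)$ under the standing assumptions on $\delta$ and $\beta$, the total failure probability
\[
\beta\cdot \sum_{k\ge u} 2^{-c'' k\alpha\delta\beta}
\]
is geometric and $o(1)$, so the good event has strictly positive (indeed $1-o(1)$) probability.

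On the good event I would argue that $k_{j+1}\le k_j/2$ for each epoch $j$, using the structural analysis culminating in Lemma~\ref{lem:invariant}: every successful hyperedge — one that was either non-biased in some phase of the epoch, or biased-in-small-$2,3$-skeleton — becomes idle by the end of that phase (its good vertices fix their color in the first case, Black-Box terminates the whole component in the second), so successful hyperedges drop out of $H_{j+1}$. Lemma~\ref{lem:localalgo} bounds the unsuccessful fraction of any maximum $2,3$-skeleton in $H_j$ by one half, giving the halving $k_{j+1}\le k_j/2$. After $\log n$ epochs, $k_{\log n+1}\le n/2^{\log n}\le 1<u$, so in the subsequent phase every remaining biased $1,2$-component has diameter at most $3u-1$ and is resolved by Black-Box, with Lemma~\ref{lem:invariant} supplying the at least $\alpha\delta$ undecided vertices per hyperedge needed by the Black-Box coloring from Section~\ref{sec:Black-Box}. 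The total phase count is at most $x=\beta\cdot\log n=\Theta(\log^2 n)$, matching the theorem.

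The main obstacle is rigorously connecting the halving statement of Lemma~\ref{lem:localalgo} — which bounds the unsuccessful fraction of a fixed topological $2,3$-skeleton — to the evolution of the maximum $2,3$-skeleton across epochs: a $2,3$-skeleton of biased hyperedges in $H_{j+1}$ need not be a $2,3$-skeleton in $H_j$, because the maximality/dominance conditions can fail once other hyperedges have been removed. I would handle this either by union-bounding the lemma uniformly over every $k$-subset of $\{1,\ldots,n\}$ that could ever arise as a $2,3$-skeleton in any residual sub-hypergraph — the bound $|\cS_k|\le\binom{n}{k}$ still applies because the universe of hyperedges is fixed to $\{1,\ldots,n\}$ — or by reproving a variant of Lemma~\ref{lem:localalgo} directly for each residual hypergraph, noting that the per-skeleton probability bound in Claim~\ref{c:prob-cond} depends only on the independent random colorings $C_i$ used within the epoch and is oblivious to which vertices have previously become idle.
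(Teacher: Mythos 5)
Your proposal is correct and follows essentially the same route as the paper: per-epoch halving of the maximum $2,3$-skeleton via Lemma~\ref{lem:localalgo}, a union bound over all sizes $k$ and over the $\beta=\log n$ epochs, and a final Black-Box cleanup once the skeleton size drops below $u$, with Lemma~\ref{lem:invariant} guaranteeing the $\alpha\delta$ undecided vertices per hyperedge. The obstacle you flag (skeletons of residual hypergraphs versus the original) is resolved exactly as you suggest and as the paper does implicitly, by taking $\cS_k$ to range over all admissible topologies with $|\cS_k|\le\binom{n}{k}$ and noting that Claim~\ref{c:prob-cond} uses only the fresh colorings of each phase, so Lemma~\ref{lem:localalgo} extends to later epochs.
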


\begin{proof}
Consider parameter $n$, which is given as an input to the algorithm.
All admissible hypergraphs of $n$ vertices and $n$ hyperedges can be partitioned into classes $\cF_k$, for $1\le k\le n$.
It follows from Lemma~\ref{lem:localalgo} that in a single epoch, algorithm \localalgo{}
colors partly all hypergraphs from $\cF_k$
in such a way that the remaining non-idle parts
belong to $\bigcup_{i\le k/2} \cF_i$,
with probability at least $1 - 2^{-c''k\alpha\delta\beta}$,
for some constant $c''>0$.
We call it {\em progressive partial coloring}.
By the union bound, algorithm \localalgo{} 
%finishes successfully 
does such progressive partial coloring in a single epoch on any admissible hypergraph of $n$ vertices and $n$
hyperedges with probability at least
\[
1-\sum_{k=1}^n 2^{-c''k\alpha\delta\beta}
>
1-\sum_{k=1}^n 2^{-2k\log n}
>
1-1/(2\log n)
\ ,
\]
where the first inequality holds because of the definitions of $\alpha,\beta,\delta$ that imply $\alpha\delta\beta> (2\log n)/c''$.

Finally, taking the union bound of the above events
over $\beta=\log n$ epochs, we obtain that in all epochs such progressive coloring is done for every admissible hypergraph with probability greater than
$1-\beta \cdot 1/(2\log n) = 1/2$.
This means that for every admissible hypergraph, the $2,3$-skeleton of its remaining non-idle part shrinks by
a factor of $1/2$ in every epoch, and thus reaches $2$ before the last 
epoch and could be directly re-colored by the end of the last epoch (because the size of the largest $2,3$-skeleton is at most $2$, which is smaller than $u$; recall that parameter $u$ was assumed to be strictly greater than $2$). All this holds with probability greater than $1/2$, which completes the proof of the theorem.
\end{proof}

Using the probabilistic argument, the following could be derived
from Theorem~\ref{thm:localalgo}.

\begin{corollary}
\label{cor:final}
There is an input set of $\beta\log n$ colorings such that algorithm \localalgo{} instantiated with these colorings 
terminates successfully in $\beta\log n$ rounds.
\end{corollary}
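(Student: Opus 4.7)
The plan is to apply the probabilistic method directly to Theorem~\ref{thm:localalgo}. First, I would observe that the probability space in question is finite: a tuple $(C_1,\ldots,C_x)$ with $x=\beta\log n$ is an element of the finite set $\{\text{red},\text{blue}\}^{n\cdot x}$, so any event of strictly positive probability must contain at least one realization. Thus it suffices to exhibit a single random experiment on this space, and a single event within it, of positive probability, whose realizations are exactly the inputs for which \localalgo{} terminates successfully on every admissible hypergraph of $n$ vertices and $n$ hyperedges.

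Next, I would take the canonical experiment: draw the $x$ colorings independently and uniformly at random. Theorem~\ref{thm:localalgo}, and more precisely the union-bound-over-$k$ argument inside its proof, establishes that the success event holds with probability strictly greater than $1/2$. The care here is that the quantifier ``for any admissible hypergraph'' in Theorem~\ref{thm:localalgo} is already handled uniformly inside its proof: the bound $1-\sum_{k=1}^{n} 2^{-c''k\alpha\delta\beta}$ on success in one epoch is derived by a union bound over all classes $\cF_k$, so the same set of sampled colorings simultaneously progresses every admissible hypergraph, and the further union bound over $\beta=\log n$ epochs preserves success for every admissible hypergraph with probability exceeding $1/2$.

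Having identified this positive-probability event on a finite sample space, the conclusion is immediate: there is at least one fixed tuple $(C_1^\star,\ldots,C_x^\star)$ of colorings for which the algorithm, instantiated with this tuple as its input sequence, terminates successfully on every admissible hypergraph within $x=\beta\log n$ rounds. Choosing this tuple gives the required deterministic input set of colorings.

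I do not anticipate any real obstacle in this step; the only mildly delicate point is making clear that the ``positive probability'' in Theorem~\ref{thm:localalgo} is the probability of the single event ``success on every admissible hypergraph of $n$ vertices and $n$ hyperedges,'' rather than a per-hypergraph guarantee, so that the probabilistic method yields one fixed input set of colorings that works universally over the admissible instance class.
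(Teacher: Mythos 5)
Your proposal is correct and matches the paper's argument: the paper derives Corollary~\ref{cor:final} from Theorem~\ref{thm:localalgo} by exactly this probabilistic-method step, relying (as you note) on the fact that the positive-probability event in the theorem is already ``success on every admissible hypergraph'' simultaneously, so a single realization of the $\beta\log n$ colorings works universally. No gap; your treatment of the quantifier over admissible hypergraphs is the only subtle point and you handle it as the paper does.
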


Note that these colorings from Corollary~\ref{cor:final} could be computed locally and independently by each vertex in the beginning of the algorithm 
in zero cost (in the beginning of the first round), e.g., 
by simulating the algorithm in local memory for all possible colorings and hypergraph topologies, and choosing lexicographically the first colorings
that are good for all admissible hypergraphs.
As mentioned, even though the local computation could be complex,
it does not affect the polylogarithmic communication round complexity.

\section{Conclusions and open problems}

An open question emerging directly from this work is
how to compute colorings $C_i$ locally in an efficient way, e.g., in polynomial time in local memory.
Our algorithm uses $\Theta(\log^2 n)$ colorings,
and what is more, these colorings need to be verified
against each admissible hypergraph. How to do it substantially faster than a naive search is a challenging problem.

Another interesting open problem would be to consider a congest model, in which nodes can exchange only a limited knowledge in each round, or other communication models with restrictions such as limitation on the number of messages
(c.f.,~\cite{GGK11}) or interference (c.f.,~\cite{CKPR11}). In our algorithm, we need to 
collect information about connected biased components that
could be big in size, thus impossible to achieve with
restricted message size.

\end{document}